\theoremstyle{plain}
\newtheorem{theorem}{Theorem}
\newtheorem{proposition}{Proposition}
\newtheorem{lemma}[theorem]{Lemma}
\theoremstyle{definition}
\newcommand{\dist}{\mathrm{dist}}
\newcommand{\len}{\mathrm{length}}
\newcommand{\bd}{\partial}
\renewcommand{\leq}{\leqslant}
\renewcommand{\geq}{\geqslant}
\renewcommand{\rho}{\varrho}
\newcommand{\eps}{\varepsilon}
\newcommand{\Reals}{\mathbb{R}}
\title{Online Search for a Hyperplane\\
in High-Dimensional Euclidean Space}
\author{Antonios Antoniadis\thanks{Department of Applied Mathematics, University of Twente, Enschede, The Netherlands. \texttt{\{a.antoniadis,r.p.hoeksma\}@utwente.nl}}, Ruben Hoeksma\footnotemark[1],\\ S\'{a}ndor Kisfaludi-Bak\thanks{Institute for Theoretical Studies, ETH Z\"urich, Zurich, Switzerland. \texttt{sandor.kisfaludi-bak@mpi-inf.mpg.de}}, and Kevin Schewior\thanks{Department of Mathematics and Computer Science, University of Cologne, Cologne, Germany. \texttt{schewior@cs.uni-koeln.de}}}
\date{September 2021}
\begin{document}

\maketitle

\begin{abstract}
    We consider the online search problem in which a server starting at the origin of a $d$-dimensional Euclidean space has to find an arbitrary hyperplane. The best-possible competitive ratio and the length of the shortest curve from which each point on the $d$-dimensional unit sphere can be seen are within a constant factor of each other. We show that this length is in $\Omega(d)\cap O(d^{3/2})$.
\end{abstract}

\section{Introduction}

In $d$-dimensional Euclidean space, given a set of objects $\mathcal{T}\subseteq\mathcal{P}(\Reals^d)$, 
the \emph{online search problem (for an object in $\mathcal{T}$)} asks for 
a curve (or \emph{search strategy}) $\zeta:\Reals_{\geq 0}\rightarrow 
\Reals^d$ with $\zeta(0)=0$ that minimizes the competitive ratio~\cite{KarlinMRS88}. 
We say that $\zeta$ is \emph{$c$-competitive} if there exists an
$\alpha\in\Reals_{\geq 0}$ such that 
\[
    \len(\zeta|_{[0,t^\star]}) \leq c \cdot \dist(0,T)+\alpha
\]
for all $T\in\mathcal{T}$, where $t^\star=\inf\{t: \zeta(t)\in T\}$. 
The infimum over all such $c$ is the \emph{competitive ratio} of $\zeta$. Problems of this type date back to Beck~\cite{Beck64} and Bellman~\cite{Bellman63}.

The arguably most basic and well-known such online search problem from this class is the version in which $d=1$ and $\mathcal{T}=\mathbb{R}$, also known as the cow-path problem, where the best-possible competitive ratio is $9$~\cite{BeckNewman1970}, achieved by visiting the points $(-2)^0,(-2)^1,\dots$ sequentially. In fact, the class of all $9$-competitive strategies has been investigated more closely~\cite{0001DJ19}.
Moreover, the competitive ratio can be improved to about $4.591$ using randomization~\cite{BeckNewman1970}. The problem has been generalized to more than two paths starting at the origin~\cite{KaoRT96}, searching various types of objects in the plane or lattice~\cite{Baeza-YatesCR93}, and many more scenarios. For a (slightly outdated) survey, see the book by Gal~\cite{Gal1980}. 

Another very natural generalization is the case of general $d$ and $\mathcal{T}$ being equal to~$\mathcal{H}^{d-1}$, the set of all hyperplanes in $\Reals^d$. We call this version the \emph{$d$-Dimensional Hyperplane Search Problem}.
When $\mathcal{T}$ is a set of affine subspaces, this is arguably the most interesting case. Note that, if $\mathcal{T}$ contains all affine subspaces of dimension~$d'\le d-2$, any constant competitive ratio is ruled out, even when $d$ is fixed.
Surprisingly, in addition to the case of $d=1$, results only seem to be known for $d=2$ and for offline versions~\cite{DumitrescuT16,AntoniadisFHS20}, leaving a gap in the online-algorithms literature. For $d=2$, it is conjectured that a logarithmic spiral that achieves a competitive ratio of about $13.811$ is optimal~\cite{Baeza-YatesCR93,FinchZhu05}. The present paper addresses the aforementioned gap by providing the first asymptotic results for $d\rightarrow\infty$. 

For the remainder of the paper, we will consider the essentially equivalent but arguably cleaner \emph{Sphere Inspection Problem}: The goal is to find a $d$-dimensional minimum-length closed curve, $\gamma$, that \emph{inspects the unit sphere} in $\Reals^d$, i.e., $\gamma$ sees every point $p$ on the unit sphere $S^{d-1}$. Here, we say that an object $O\subseteq\mathbb{R}^d$ \emph{sees} a point $p'$ on the surface of the unit sphere  if there is a point~$p\in O$ such that the line segment $pp'$ intersects the unit \emph{ball} exclusively at $p'$. Note that the curve is not required to start in the origin any more. Such a minimum-length curve exists  for any dimension~\cite{GhomiWenk2021+}. While this problem is trivial for $d\in\{1,2\}$, it has only been shown recently that the best-possible length for $d=3$ is $4 \pi$~\cite{GhomiWenk2021+}. No results for higher dimensions are known. Such visibility problems have also been considered from an algorithmic point of view, e.g.,~\cite{CarlssonJN99,DrorELM03}.

While the connection between hyperplane search and sphere inspection seems to be folklore (e.g.,~\cite{GhomiWenk2021+}), we state it formally and provide a short proof for completeness. Let $I$ be an interval, and for $\beta\in\Reals$, a set $T\subseteq\Reals^d$, and curve $\gamma: I\rightarrow\Reals^d$, we denote by $\beta\cdot T$ the set $\{\beta\cdot x: x\in T\}$ and by $\beta\cdot\gamma$ the curve~$I\rightarrow\Reals^d,t\mapsto \beta\cdot\gamma(t)$. 

\begin{proposition}\label{prop:equivalence}
    Let $f:\mathbb{N}\rightarrow\mathbb{R}_{\geq 0}$. The following statements are equivalent:
    \begin{itemize}
        \item[(i)]
        There exists a length-$O(f(d))$ closed curve in~$\mathbb{R}^d$ that inspects $S^{d-1}$.
        \item[(ii)]
        There exists an $O(f(d))$-competitive strategy for the $d$-Dimensional Hyperplane Search Problem.
    \end{itemize}
\end{proposition}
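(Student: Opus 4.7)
The plan is to prove both implications via a common geometric characterization: for $q\in\Reals^d$ and $p\in S^{d-1}$, the point $q$ sees $p$ if and only if $\langle q,p\rangle\geq 1$, as a short parametrization of the segment $qp$ verifies. Equivalently, $q$ lies in the closed halfspace bounded by the tangent hyperplane of $S^{d-1}$ at $p$ opposite the origin. Consequently, a curve $\gamma$ inspects $S^{d-1}$ iff for every unit $p$ some $q\in\gamma$ satisfies $\langle q,p\rangle\geq 1$.

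For (i)$\Rightarrow$(ii), let $\gamma$ be a closed inspection curve of length $L=O(f(d))$. I first show $\gamma\subseteq B(0,L/2)$: any two points of a closed curve of length $L$ lie within Euclidean distance $L/2$ (shortest half-loop), while the inspection property applied to both $p$ and $-p$ forces the convex hull of $\gamma$ to contain the unit ball, and in particular the origin; hence the ball of radius $L/2$ centered at any point of $\gamma$ must contain the origin. Prepending a segment from the origin to some fixed $q_1\in\gamma$ and appending its reverse yields a closed curve $\gamma'$ of length at most $2L$ starting (and ending) at the origin. The search strategy traverses $2^i\gamma'$ for $i=0,1,2,\ldots$. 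In iteration $i$, any hyperplane $H$ with unit normal $p$ at distance $\delta\leq 2^i$ from the origin is crossed: the characterization supplies some $q\in\gamma$ with $\langle q,p\rangle\geq 1$, so $\langle 2^iq,p\rangle\geq \delta$, placing $2^iq$ on the side of $H$ opposite the origin; together with the fact that $\gamma'$ contains the origin, continuity forces a crossing. The cumulative length used by iteration $\lceil\log_2\delta\rceil$ is $O(\delta L)$ by a geometric sum, yielding competitive ratio $O(L)=O(f(d))$.

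For (ii)$\Rightarrow$(i), let $\zeta$ be a strategy with competitive ratio $c=O(f(d))$ and additive constant $\alpha$. For $R\geq\alpha$, competitiveness implies that the prefix $\zeta|_{[0,cR+\alpha]}$ touches every hyperplane at distance exactly $R$ from the origin. Rescaling by $1/R$ yields a curve $\eta_R$ starting at the origin of length at most $c+1$ that touches every tangent hyperplane of $S^{d-1}$; any touching point $q$ on the tangent at $p$ satisfies $\langle q,p\rangle=1$, and the characterization then guarantees that $q$ sees $p$. Hence $\eta_R$ inspects $S^{d-1}$, and closing it with a return segment to the origin (of length at most $c+1$) produces a closed inspection curve of length $O(c)=O(f(d))$.

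The principal obstacle I anticipate is the first direction: controlling how far an arbitrary closed inspection curve may stray from the origin. The diameter-plus-convex-hull argument sketched above confines the whole curve to $B(0,L/2)$, which keeps the cost of connecting the strategy to $\gamma$ within the same asymptotic budget and avoids any loss in the competitive ratio.
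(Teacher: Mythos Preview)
Your proof is correct and follows the same approach as the paper: doubling scaled copies of the inspection curve for (i)$\Rightarrow$(ii) and truncating-then-rescaling the search strategy for (ii)$\Rightarrow$(i). The only differences are cosmetic---your explicit characterization $\langle q,p\rangle\geq 1$, your diameter-plus-convex-hull bound $\gamma\subset B(0,L/2)$ (the paper instead shows $\|\gamma_0\|_2\le L$ by arguing that $\gamma$ must cross the hyperplane through the origin orthogonal to $\gamma_0$), and your use of the origin itself as one of the two points in the IVT step rather than two points seeing antipodal sphere points---but the overall structure is identical.
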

\begin{proof}
    We first show that (i) implies (ii). Fix $d\in\mathbb{N}$, let $\gamma$ be the length-$\ell$ curve for the Sphere Inspection Problem in $\mathbb{R}^d$, and let $\gamma_0$ be a point on $\gamma$. We show that there exists a $(12\ell)$-competitive strategy for the $d$-Dimensional Hyperplane Search Problem where the additive constant~$\alpha$ is equal to~$3\ell$. Our strategy works in an infinite number of phases, starting with Phase~$0$. In each Phase $i$, the strategy $\zeta$ consists of a straight line from the origin to $2^i\gamma_0$, the curve $2^i\cdot \gamma$, and a straight line back from $2^i\gamma_0$ to the origin. 
    
    In the following, we show that, in Phase $i$, all hyperplanes of distance at most $2^i$ from the origin are visited, and the total distance traversed in Phase $i$ is at most $3\cdot 2^i\cdot \ell$, which implies the claim. We show two parts separately:
    
    \begin{itemize}
        \item We show that all claimed hyperplanes are visited by $\zeta$. Consider Phase $i$ and a hyperplane $H\in\mathcal{H}^{d-1}$ at distance at most $2^i$ from the origin. Let $n$ be a unit normal vector of $H$. Note that $2^i\cdot\gamma$ sees both the point $2^i n$ and the point~$-2^i n$. Let $p_1$ be a point from which $2^i\cdot\gamma$ sees $2^i n$, and let $p_2$ be a point from which it sees $-2^i n$. Note that, if $\dist(0,H)=2^i$, then $p_1$ or $p_2$ may be on $H$. In that case we are done. Otherwise $p_1$ and $p_2$ are on different sides of $H$, and, by the intermediate value theorem, $2^i\cdot \gamma$ intersects $H$.
        
        \item We next bound the distance traversed by $\zeta$ in Phase $i$. Note that it suffices to show that the start point of $2^i\cdot\gamma$, which is equal to its end point, has distance at most $2^i \ell$ from the origin. By scaling, we can restrict to showing $\lVert \gamma_0\rVert_2\leq \ell$. Let $H$ be the hyperplane $\langle\gamma_0,x\rangle=0$. Note that $\gamma$ must intersect $H$ since otherwise $\gamma$ does not see $-\gamma_0/\lVert\gamma_0\rVert_2$, thus $\dist(\gamma_0,H)\le \ell$. Since this distance (of $\gamma_0$ to $H$) is precisely~$\lVert\gamma_0\rVert_2$, it follows that $\lVert \gamma_0\rVert_2\leq \ell$.
    \end{itemize}
    
\noindent We now show that (ii) implies (i). Again fix $d\in\mathbb{N}$. Assume that there exists a curve $\gamma$ starting in the origin and $\alpha\in \Reals_{\geq 0}$ such that, for all hyperplanes $H\in\mathcal{H}^{d-1}$, $\gamma$ visits $H$ after traversing at most a length of $c\cdot\dist(0,H)+\alpha$. We show that, for any $\varepsilon>0$, there exists a, not necessarily closed, length-$(c+\varepsilon)$ curve that inspects the sphere. This then implies the existence of a length-$(2c+\varepsilon)$ closed curve that inspects the sphere. We define 
\[
    \zeta:=\frac{\varepsilon}{\alpha}\cdot\gamma|_{[0,t^\star]}\text{, where }t^\star:=\sup_{H\in\mathcal{H}^{d-1} : \dist(0,H)\leq\nicefrac{\alpha}{\varepsilon}}\min\{t:\gamma(t)\in H\}\,.
\]
Note that the minimum exists because $H$ is closed. Since for each $H\in\mathcal{H}^{d-1}$ with $\dist(0,H)\leq\nicefrac{\alpha}{\varepsilon}$, $\gamma$ visits $H$ after traversing at most a length of $c\cdot\nicefrac{\alpha}{\varepsilon}+\alpha$, the length of $\gamma|_{[0,t^\star]}$ is at most $c\cdot\nicefrac{\alpha}{\varepsilon}+\alpha$, so the length of $\zeta$ is at most $c+\varepsilon$.

It remains to be shown that $\zeta$ indeed sees every point $p\in S^{d-1}$. Let $H$ be the hyperplane tangent to the unit sphere in $p$. Further let $t:=\min\{t: \gamma(t)\in\nicefrac{\alpha}{\varepsilon}\cdot H\}$. By definition of $t^\star$, we have $t\leq t^\star$. Further, by definition of $\zeta$, we have $\zeta(t)\in H$, implying that $\zeta(t)$ sees $p$.
\end{proof}

In Section~\ref{sec:prelim}, we give an auxiliary lemma. In Sections~\ref{sec:lower} and~\ref{sec:upper} we will use that lemma and show the following two theorems.

\begin{theorem}\label{thm:lower}
Any curve in $\Reals^d$ that inspects $S^{d-1}$ has length at least~$2d$.
\end{theorem}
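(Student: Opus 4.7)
My plan is to reformulate the inspection hypothesis in terms of convex-body containment and then reduce the statement to a coordinate-wise total-variation bound. By definition, $\gamma$ sees $u \in S^{d-1}$ iff there exists $q$ on $\gamma$ with $\langle q, u\rangle \geq 1$, i.e.\ the support function $h_\gamma(u) := \sup_{q \in \gamma}\langle q,u\rangle$ satisfies $h_\gamma(u)\geq 1$. By support-function duality for convex bodies, this condition holding for every unit $u$ is equivalent to $\operatorname{conv}(\gamma) \supseteq B^d$, the closed unit ball in $\Reals^d$; I expect this equivalence, useful for both the lower- and upper-bound analyses, to be essentially the content of the auxiliary lemma of Section~\ref{sec:prelim}. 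Applying it to the $2d$ axis directions $\pm e_1,\ldots,\pm e_d$ yields, for each $i$, points on $\gamma$ of $i$-th coordinate $\geq 1$ and $\leq -1$, so $t \mapsto \gamma_i(t)$ has total variation $\mathrm{TV}(\gamma_i)\geq 2$, and hence $\sum_{i=1}^d \mathrm{TV}(\gamma_i) \geq 2d$.

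The next step is to turn this into a length bound. Parametrising $\gamma$ by arc length, Cauchy--Schwarz yields $\sum_i \mathrm{TV}(\gamma_i) = \int_0^L \|\gamma'(t)\|_1\,dt \leq \sqrt{d}\,L$, and hence only $L \geq 2\sqrt{d}$. This is a factor of $\sqrt{d}$ short of the target, reflecting the fact that a unit velocity vector in $\Reals^d$ can distribute its mass equally among the $d$ coordinates.

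To close the gap I would aim to exhibit $d$ pairwise \emph{disjoint} sub-arcs $I_1,\dots,I_d \subseteq [0,L]$ such that on $I_i$ the coordinate $\gamma_i$ alone performs its transition between the levels $+1$ and $-1$; then $\len(\gamma|_{I_i})\geq 2$ and disjointness forces $L \geq \sum_{i=1}^d \len(\gamma|_{I_i}) \geq 2d$. The main obstacle is precisely the disjointness. A purely diagonal motion---for instance the straight segment from $(1,\ldots,1)$ to $(-1,\ldots,-1)$, of length only $2\sqrt{d}$---realises all $d$ axis transitions simultaneously, so the weaker hypothesis ``sees $\pm e_i$ for every $i$'' alone permits the $\sqrt{d}$-saturated example. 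The essential input must therefore be the \emph{full} hypothesis $\operatorname{conv}(\gamma)\supseteq B^d$: the diagonal segment above fails to see, e.g.\ the direction $(e_1-e_2)/\sqrt{2}$, so a genuine inspection curve has enough off-diagonal motion to let one peel the axis transitions apart one at a time. Turning this intuition into a rigorous scheduling of the $2d$ visibility witnesses along $[0,L]$ is where the real work lies, and is the step I expect either the auxiliary lemma or a dedicated combinatorial argument in Section~\ref{sec:lower} to supply.
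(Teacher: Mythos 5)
Your opening reduction is sound: a point $q$ sees $u\in S^{d-1}$ precisely when $\langle q,u\rangle\geq 1$, so inspecting the sphere is equivalent to $\operatorname{conv}(\gamma)\supseteq B^d$, and this is indeed essentially the content of the auxiliary lemma of Section~\ref{sec:prelim}. The total-variation computation is also correct as far as it goes, but, as you yourself observe, it only yields $\len(\gamma)\geq 2\sqrt{d}$. The remaining factor of $\sqrt{d}$ is the entire substance of the theorem, and your proposal leaves it as an explicitly unproved step. The completion you sketch --- scheduling $d$ pairwise disjoint sub-arcs on each of which a single coordinate alone performs its full transition between the levels $+1$ and $-1$ --- is not something the hypothesis supplies: nothing forces the transitions of distinct coordinates to occur on disjoint portions of the curve, and any argument that only tracks the $2d$ axis witnesses (or any other fixed family of $O(d)$ directions, one transition per direction) runs into the same diagonal obstruction you describe. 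Exploiting the full continuum of directions in $\operatorname{conv}(\gamma)\supseteq B^d$ requires a different mechanism, which your write-up does not identify; as it stands this is a genuine gap, not a routine detail.

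For comparison, the paper's proof does not work coordinate-wise at all. It cuts $\gamma$ into $n=\lceil\len(\gamma)/\delta\rceil$ contiguous arcs of length at most $\delta<2$; each such arc lies in a ball of radius $\delta/2<1$, and a geometric argument (replacing that ball by the apex of a cone tangent to both the ball and the sphere) shows that the set of sphere points seen by any one arc is contained in an \emph{open hemisphere}. Since $\gamma$ inspects the sphere, these $n$ open hemispheres cover $S^{d-1}$, and a separate lemma, proved by induction on $d$, shows that at least $d+1$ open hemispheres are needed to cover $S^{d-1}$. Hence $\lceil\len(\gamma)/\delta\rceil\geq d+1$, so $\len(\gamma)\geq d\delta$ for every $\delta<2$, giving $\len(\gamma)\geq 2d$. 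The hemisphere-covering number is precisely the ingredient that converts ``a short arc sees only a small cap'' into a bound linear in $d$; some substitute for it would be needed to rescue your coordinate-based plan.
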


\begin{theorem}\label{thm:upper}
There exists a closed curve $\gamma$ in $\Reals^d$ of length $(2d)^{3/2}$ that inspects $S^{d-1}$.
\end{theorem}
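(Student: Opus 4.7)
The plan is to build an explicit polygonal closed curve through a carefully chosen set of $2d$ points on the sphere of radius $\sqrt{d}$ centred at the origin. The key underlying observation is that a point $q\in\Reals^d$ sees a point $p\in S^{d-1}$ precisely when $\langle q,p\rangle \geq 1$, i.e.\ when $q$ lies in the closed half-space bounded by the tangent hyperplane to $S^{d-1}$ at $p$ (a short calculation using Cauchy--Schwarz along the segment $(1-t)q+tp$). This reduces the task to choosing a point set whose support function is at least~$1$ in every direction, and then linking the points by a short cycle.

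For the point set I take $V:=\{\pm\sqrt{d}\,e_i : i=1,\dots,d\}$. To verify that $V$ inspects $S^{d-1}$, fix $p\in S^{d-1}$; then $\sum_i p_i^2 = 1$ forces $\max_i |p_i|\geq 1/\sqrt{d}$, so choosing $i$ attaining this maximum together with the appropriate sign yields $q=\mathrm{sgn}(p_i)\sqrt{d}\,e_i\in V$ with $\langle q,p\rangle = \sqrt{d}\,|p_i| \geq 1$. By the observation above, each $p\in S^{d-1}$ is seen from some point of $V$.

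For the routing (assuming $d\geq 2$), I link the points of $V$ by straight segments in the cyclic order
\[
\sqrt{d}\,e_1,\ \sqrt{d}\,e_2,\ \dots,\ \sqrt{d}\,e_d,\ -\sqrt{d}\,e_1,\ -\sqrt{d}\,e_2,\ \dots,\ -\sqrt{d}\,e_d,\ \sqrt{d}\,e_1.
\]
No two consecutive points in this cycle are antipodal, so each of the $2d$ edges is of the form $\pm\sqrt{d}\,e_i\to\pm\sqrt{d}\,e_j$ with $i\neq j$, and hence has Euclidean length $\sqrt{d+d}=\sqrt{2d}$. Summing gives a total length of $2d\cdot\sqrt{2d}=(2d)^{3/2}$, and since the curve passes through every point of $V$, it inspects $S^{d-1}$.

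There is no substantial obstacle: the only delicate design choice is the common radius $\sqrt{d}$, which is essentially forced from both sides. A smaller radius would fail on near-diagonal directions such as $d^{-1/2}(1,\dots,1)$, while a larger radius would inflate every edge proportionally and break the $(2d)^{3/2}$ bound. Ordering the $2d$ points so that antipodal pairs are never consecutive is straightforward for $d\geq 2$, and equalises all edge lengths, yielding the claimed bound exactly.
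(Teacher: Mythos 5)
Your proof is correct and follows essentially the same route as the paper: both inspect the sphere by touring the $2d$ vertices $\pm\sqrt{d}\,e_i$ of the scaled cross-polytope along a closed cycle of $2d$ segments, each of length $\sqrt{2d}$. The only cosmetic differences are that the paper certifies visibility via a convex-hull containment lemma (using $\lVert x\rVert_1\le\sqrt{d}\,\lVert x\rVert_2$) rather than your direct tangent-halfspace criterion, and obtains the ordering by proving the cocktail-party graph Hamiltonian by induction rather than writing the cycle down explicitly.
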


\section{An Auxiliary Lemma}
\label{sec:prelim}

The following lemma simplifies thinking about the Sphere Inspection Problem.

\begin{lemma}\label{lem:aux}
    Let $P\subset\Reals^d$ be the convex hull of some point set $V$. Then, the following two statements are equivalent.
    \begin{enumerate}[label=(\roman*)]
        \item We have $S^{d-1}\subset P$.\label{item:SsubsetP}
        \item The set $V$ sees every point $p\in S^{d-1}$.\label{item:Vseesp}
    \end{enumerate}
\end{lemma}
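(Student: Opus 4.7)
The plan is to first rephrase visibility as a single inner-product inequality, and then handle each direction using Carath\'eodory in one and a separating-hyperplane argument in the other. The key observation is that for $v\in\Reals^d$ and $p\in S^{d-1}$, the point $v$ sees $p$ if and only if $\langle v,p\rangle\geq 1$. Indeed, the closed half-space $H_p^+=\{x:\langle x,p\rangle\geq 1\}$ meets the closed unit ball exactly at $\{p\}$ (any $x$ in the intersection satisfies $1\leq\langle x,p\rangle\leq\|x\|\cdot\|p\|\leq 1$, forcing equality in Cauchy--Schwarz and $x=p$), so if $\langle v,p\rangle\geq 1$, the segment $[v,p]$ lies entirely in $H_p^+$ and therefore meets the ball only at $p$. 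Conversely, if $\langle v,p\rangle<1$, the derivative of $\|(1-t)v+tp\|^2$ at $t=1$ equals $2(1-\langle v,p\rangle)>0$, so the segment dips into the open ball just before reaching $p$ and $v$ does not see $p$.

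For (i)$\Rightarrow$(ii), I would pick any $p\in S^{d-1}\subset P$ and invoke Carath\'eodory to write $p=\sum_i\lambda_iv_i$ with $v_i\in V$, $\lambda_i\geq 0$ and $\sum_i\lambda_i=1$. Pairing with $p$ yields $1=\langle p,p\rangle=\sum_i\lambda_i\langle v_i,p\rangle$, so at least one $v_i$ satisfies $\langle v_i,p\rangle\geq 1$ and, by the criterion above, sees $p$.

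For (ii)$\Rightarrow$(i), I would argue by contradiction via strict separation. Suppose $p_0\in S^{d-1}\setminus P$. In the intended application $V$ is the compact image of a closed curve, so $P$ is compact (hence closed) and the strict separation theorem yields $u\neq 0$ and $c\in\Reals$ with $\langle u,v\rangle\leq c$ for every $v\in V$ while $\langle u,p_0\rangle>c$. Applying (ii) to $p=u/\|u\|\in S^{d-1}$ produces $v\in V$ with $\langle v,u\rangle\geq\|u\|$, so $c\geq\|u\|$; on the other hand Cauchy--Schwarz gives $\langle u,p_0\rangle\leq\|u\|$, yielding the contradictory chain $\|u\|\geq\langle u,p_0\rangle>c\geq\|u\|$.

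The only subtle point is that strict separation requires $P$ to be closed, which holds for free in the intended application where $V$ is the compact image of a curve; without that hypothesis the same reasoning still gives $S^{d-1}\subset\overline{P}$, which is what will actually be needed for the length estimates in Sections~\ref{sec:lower} and~\ref{sec:upper}.
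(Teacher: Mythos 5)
Your proof is correct and takes essentially the same route as the paper's: both directions rest on the tangent-hyperplane characterization of visibility (your criterion $\langle v,p\rangle\geq 1$), with (i)$\Rightarrow$(ii) locating a point of $V$ on the far side of the tangent hyperplane at $p$, and (ii)$\Rightarrow$(i) separating a missed point from $P$ and observing that the sphere point in the normal direction cannot be seen. Your algebraic packaging---the averaging argument over a convex combination in place of the paper's two-points-separated-by-the-tangent-hyperplane step, and the numerical contradiction in place of the geometric one---is an equivalent rendering, and your explicit remark that strict separation needs $P$ closed flags a subtlety on which the paper's proof silently relies as well.
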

\begin{proof}

We start by showing that (i) implies (ii). Let $H$ be the hyperplane that is tangent to the unit sphere in $p$. If $H$ contains a point $v\in V$, then the lemma directly follows since the line segment $vp$ lies completely within $H$ and therefore does not intersect the unit sphere other than at $p$. So assume that there is no point $v\in V$ that is contained in $H$. Then, and since $P$ contains one point of the hyperplane ($p\in S^{d-1}$, which is contained in $P$ by \ref{item:SsubsetP}), there must exist two points $v_1,v_2\in V$ which are separated by~$H$. Without loss of generality, let $v_1$ be the point in the halfspace defined by $H$ whose interior is disjoint from $S^{d-1}$. Then the line segment $v_1p$ is completely contained in that halfspace, and since the interior of the halfspace is disjoint from $S^{d-1}$, $v_1$ can see $p$.

Now we show that (ii) implies (i). Towards a contradiction, assume that there exists $p^\star\in S^{d-1}$ with $p^\star\notin P$. Then consider a hyperplane $H$ that separates $p^\star$ from $P$, and let $n$ be a unit normal vector of $H$ pointing away from $P$. Consider the hyperplane $H'$ that is tangent to $S^{d-1}$ in $n$. Clearly, both $P$ and $S^{d-1}\setminus\{n\}$ are contained in one open halfspace defined by $H'$. Note that no point in this halfspace can see $n$. Therefore, no $v\in V$ can see $n$; a contradiction.
\end{proof}

\section{Lower Bound}
\label{sec:lower}

The goal of this section is to prove Theorem~\ref{thm:lower}. Towards this, let $\gamma$ be a curve in $\Reals^d$ that inspects the unit sphere. We cut $\gamma$ into a minimum number of contiguous portions of length at most $\delta$ for some fixed $\delta<2$. Let $\xi_1,\dots,\xi_n$ be the resulting tour portions, where $n=\lceil |\gamma|/\delta \rceil$. Choose a portion $\xi$, and let $x$ be its midpoint. Clearly $\xi$ is contained in the ball $B$ that has center $x$ and radius $\delta/2$. Further define $C$ to be the cone that is the intersection of all halfspaces that contain both $B$ and $S^{d-1}$ and whose defining hyperplanes are tangent to both $B$ and $S^{d-1}$. Note that the set of points on the sphere that can be seen by the curve $\xi$ can also be seen from the apex of $C$, as visualized by Figure~\ref{fig:cone}. This holds since the radius of $B$ is $\delta/2<1$. Note that a single point $p\in \Reals^d$ can see some subset of an open hemisphere $H$ of the unit sphere. Let $H_1,\dots,H_n$ denote a set of open hemispheres such that $H_i$ covers the portion of the sphere seen by $\xi_i$. Since $\gamma$ inspects the sphere, we have that $(H_i)_{i=1}^n$ covers the sphere. 

\begin{figure}
    \centering
    \begin{tikzpicture}
    \node[anchor=south west,inner sep=0] (image) at (0,0) {\includegraphics[width=\textwidth]{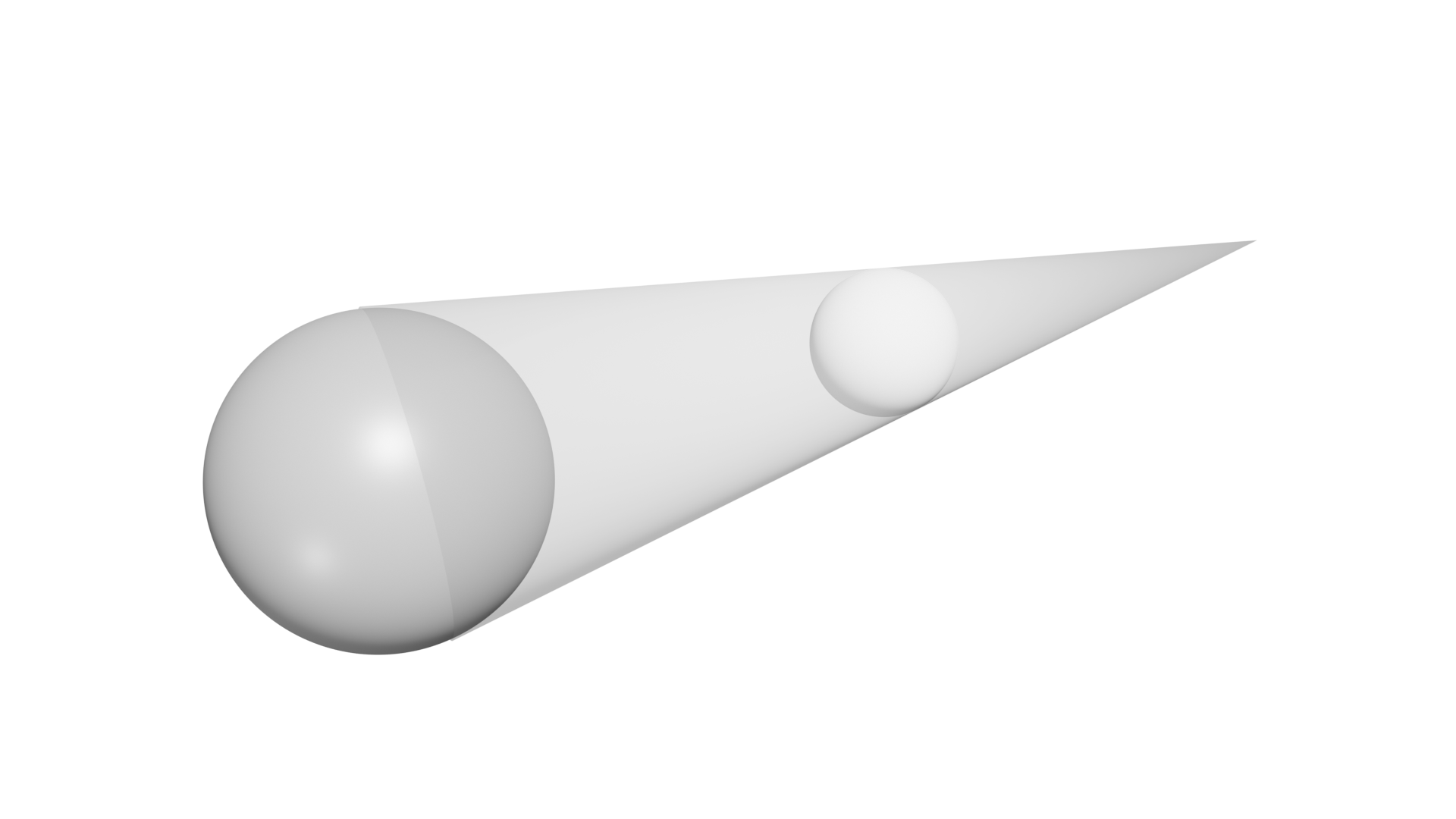}};
    \begin{scope}[x={(image.south east)},y={(image.north west)}]
        \node at (.6065,.581) {$B$};
        
        \node at (0.22,0.4) {$S^{d-1}$};
        
        \node at (0.45,0.62) {$C$};
        
        \draw (.607,.5816) circle (23.08pt);
        \coordinate (C) at (.26025,.413);
        \draw[rotate around={20:(C)}] (C) ellipse (54.7pt and 53.9pt);
        \draw (.31,.2183) -- (.862,.706) -- (.247,.625);
        \draw (.31,.22) to [bend right=11] (.247,.624);

        \draw[dashed] (.31,.22) to [bend left=11] (.247,.624);
        
        \draw[dotted] (.595,.8125) -- (0.61,.8);
        \draw[] (.61,.8) .. controls (.67,.75) .. (0.64,.63);
        \draw[dashed] (0.64,.63) .. controls (.61,.52) .. (.58,.57);        
        \draw[dashed,color=black!30] (.58,.57) .. controls (.57,.58) .. (.49,.5);
        \draw[color=black!30,dotted] (.487,.497) -- (.477,.487);
        \node at (.6,.515) {\scriptsize $\xi$};
    \end{scope}
\end{tikzpicture}
    \caption{The apex of the cone $C$ sees all points on the unit sphere that $B$ does.}
    \label{fig:cone}
\end{figure}

We need the following lemma.

\begin{lemma}\label{lem:hemicover}
The minimum number of open hemispheres that cover $S^{d-1}$ is $d+1$.
\end{lemma}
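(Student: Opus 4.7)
The plan is to prove the lemma by establishing matching upper and lower bounds. Parametrize each open hemisphere of $S^{d-1}$ by its pole, writing it as $H_v := \{x \in S^{d-1} : \langle x, v\rangle > 0\}$ for a unit vector $v$. A family $\{H_{v_i}\}_{i=1}^k$ then covers $S^{d-1}$ if and only if there is no $x \in S^{d-1}$ with $\langle x, v_i\rangle \leq 0$ for every $i$; this reformulation is what I would use throughout.

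For the upper bound, I would exhibit $d+1$ open hemispheres whose poles are the vertices of a regular simplex inscribed in $S^{d-1}$. Concretely, project the standard basis of $\mathbb{R}^{d+1}$ onto the hyperplane $\{y : \sum_j y_j = 0\} \cong \mathbb{R}^d$ and normalize to obtain unit vectors $v_1,\dots,v_{d+1}$ with $\sum_i v_i = 0$ that span $\mathbb{R}^d$. For any $x \in S^{d-1}$, we have $\sum_i \langle x, v_i\rangle = 0$, so the inner products cannot all be $\leq 0$ unless they all vanish; but then $x$ would be orthogonal to every $v_i$, hence to all of $\mathbb{R}^d$, forcing $x = 0$, a contradiction. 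Thus some $\langle x, v_i\rangle > 0$ and $x \in H_{v_i}$.

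For the lower bound, I would show that any $d$ open hemispheres $H_{v_1},\dots,H_{v_d}$ leave a point of $S^{d-1}$ uncovered, by producing $x \in S^{d-1}$ with $\langle x, v_i\rangle \leq 0$ for every $i$. Split into two cases. If $v_1,\dots,v_d$ are linearly dependent, they lie in a proper subspace of $\mathbb{R}^d$, so any unit vector orthogonal to that subspace satisfies $\langle x, v_i\rangle = 0$ for all $i$ and therefore belongs to no $H_{v_i}$. Otherwise $v_1,\dots,v_d$ form a basis of $\mathbb{R}^d$, and the linear system $\langle x, v_i\rangle = -1$ for $i = 1, \dots, d$ has a unique nonzero solution; normalizing it yields the required unit vector, with strictly negative inner product against every pole.

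The argument is essentially elementary linear algebra and I do not anticipate a significant obstacle. The only minor technical point is justifying that the simplex poles in the upper-bound construction indeed span $\mathbb{R}^d$, which follows from the fact that the Gram matrix with diagonal entries $1$ and off-diagonal entries $-1/d$ has kernel exactly spanned by the all-ones vector and hence rank $d$.
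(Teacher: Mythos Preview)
Your proof is correct. The upper bound is close to the paper's: both take the poles to be the vertex directions of a simplex, but the paper invokes its auxiliary Lemma~\ref{lem:aux} (any simplex containing $S^{d-1}$ has vertices that see every point) whereas you pick the \emph{regular} simplex and argue directly from $\sum_i v_i=0$ and the spanning property. The lower bound, however, is a genuinely different route. The paper proceeds by induction on $d$: the boundary of any one open hemisphere is a copy of $S^{d-2}$, and each remaining hemisphere intersects this boundary in at most an open hemisphere of $S^{d-2}$, so by induction at least $d$ further hemispheres are needed. Your argument is instead a direct linear-algebra case split on whether the $d$ pole vectors are independent, producing an uncovered point explicitly (either a unit vector orthogonal to their span, or the normalized solution of $\langle x,v_i\rangle=-1$). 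Your approach is self-contained and arguably more elementary; the paper's induction is shorter to state and dovetails with the geometric framing used elsewhere in the paper.
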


\begin{proof}
To show that $d+1$ points are sufficient, choose a simplex containing the sphere. Now consider the set of open hemispheres whose poles are colinear with the origin and a vertex of that simplex. Indeed, since the simplex contains the sphere, by Lemma~\ref{lem:aux}, the set of vertices of the simplex sees every point $p\in S^{d-1}$. Since each point  sees only a subset of the corresponding open hemisphere, the upper bound follows.

For the lower bound, we can use induction on $d$. Clearly the circle $S^1$ needs at least three open half-circles to be covered. For $S^{d-1}$, we have that the boundary of the first hemisphere $H_1$ is $S^{d-2}$, and each hemisphere $H_i$ can cover at most an open hemisphere of $\bd H_1$. So by induction at least $(d-2)+2=d$ hemispheres are needed to cover $\bd H_1$, and thus we have that at least $d+1$ hemispheres are needed to cover $S^{d-1}$.
\end{proof}

By Lemma~\ref{lem:hemicover} we have that $n\geq d+1$, implying $\lceil \len(\gamma)/\delta \rceil\geq d+1$ and therefore $\len(\gamma)/\delta \geq d$. With $\delta =2-\eps$, we have that $\len(\gamma) \geq (2-\eps)\cdot d$ for all $\eps>0$, and thus $\len(\gamma) \geq 2d$.

\section{Upper Bound}
\label{sec:upper}

In this section we prove Theorem~\ref{thm:upper}.
Let $C^d$ be the $d$-dimensional cross-polytope, i.e., the polytope $\{x\in\Reals^d: \lVert x\rVert_1 \le 1\}$. Define $\bar{C}^d:=\sqrt{d}\cdot C^d= \{x\in\Reals^d: \Vert x\rVert_1 \le \sqrt{d}\}$, as shown in Figure~\ref{fig:cross_polytope}. We claim that the vertices of $\bar{C}^d$ inspect the unit sphere.

\begin{lemma}
\label{lem:visibility}
For any point $p\in S^{d-1}$, there exists a vertex $v$ of $\bar{C}^d$ that sees $p$.
\end{lemma}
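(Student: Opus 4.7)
The plan is to reduce the claim to the containment $S^{d-1}\subset \bar{C}^d$ and then verify that containment by a short norm inequality. Observe that $\bar{C}^d$ is the convex hull of its $2d$ vertices $V=\{\pm\sqrt{d}\,e_i : i=1,\dots,d\}$, where $e_i$ is the $i$-th standard basis vector. By Lemma~\ref{lem:aux} applied to $P=\bar{C}^d$ and this vertex set $V$, statement (i) $S^{d-1}\subset \bar{C}^d$ is equivalent to statement (ii) that every $p\in S^{d-1}$ is seen by some vertex of $\bar{C}^d$. So it suffices to verify the geometric containment.

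To verify $S^{d-1}\subset \bar{C}^d$, I would take an arbitrary $x\in S^{d-1}$, i.e.\ $\lVert x\rVert_2=1$, and show that $\lVert x\rVert_1\leq \sqrt{d}$, so that $x$ lies in $\{y\in\Reals^d: \lVert y\rVert_1\le \sqrt{d}\}=\bar{C}^d$. This is precisely the standard inequality between the $\ell_1$- and $\ell_2$-norms in $\Reals^d$, which follows from Cauchy--Schwarz applied to the vectors $(|x_1|,\dots,|x_d|)$ and $(1,\dots,1)$:
\[
    \lVert x\rVert_1 \;=\; \sum_{i=1}^d |x_i|\cdot 1 \;\leq\; \Bigl(\sum_{i=1}^d x_i^2\Bigr)^{1/2}\Bigl(\sum_{i=1}^d 1\Bigr)^{1/2} \;=\; \sqrt{d}\cdot\lVert x\rVert_2 \;=\; \sqrt{d}\,.
\]
Combining the two steps yields the lemma.

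There is no substantial obstacle here: the work is done by Lemma~\ref{lem:aux}, which converts a visibility condition (hard to handle directly because each vertex sees only a spherical cap) into a plain convex-containment condition. Once that reduction is in place, the containment $S^{d-1}\subset\bar{C}^d$ is immediate from Cauchy--Schwarz. The only thing to be slightly careful about is that the roles of $V$ and $P$ in Lemma~\ref{lem:aux} require $P$ to be exactly the convex hull of $V$; this is automatic for the cross-polytope since $\bar{C}^d$ is by definition the convex hull of $\{\pm\sqrt{d}\,e_i\}_{i=1}^d$.
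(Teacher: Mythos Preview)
Your proof is correct and follows essentially the same route as the paper: reduce via Lemma~\ref{lem:aux} to the containment $S^{d-1}\subset\bar{C}^d$, and verify that containment by the Cauchy--Schwarz inequality $\lVert x\rVert_1\le\sqrt{d}\,\lVert x\rVert_2$. Your version is slightly more explicit about the vertex set and the application of Cauchy--Schwarz, but the argument is the same.
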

\begin{proof}
We prove that $\bar{C}^d$ contains the unit sphere; then Lemma~\ref{lem:aux} shows the claim. In other words we prove that for any point $x\in\Reals^d$ with $\lVert x\rVert_2 \le 1$ (so, $x\in S^{d-1}$), it also holds that $\lVert x\rVert_1 \le \sqrt{d}$. Indeed, by the Cauchy-Schwartz inequality we have that $\lVert x\rVert_1\le \sqrt{d}\cdot \lVert x\rVert_2$ for any $x$ which in turn can be upper bounded by $\sqrt{d}$ for any $x\in\Reals^d$ with $\lVert x\rVert_2\le 1$.
\end{proof}

Let $G(\bar{C}^d)$ be the graph\footnote{The graph $G(P)$ of a polytope $P$ is a graph with a node for each vertex of $P$ and an edge connecting those nodes if $P$ has an edge between the corresponding vertices. The graph of $P$ is also referred to as its $1$-skeleton.} of polytope $\bar{C}^d$. Note that $G(\bar{C}^d)$ is the so-called \emph{cocktail party graph} which can be obtained by removing a perfect matching from a complete graph on $2d$ vertices (indeed, in $\bar{C}^d$ any vertex $v$ has an edge to any other vertex except~$-v$). We next prove that $G(\bar{C}^d)$ is Hamiltonian.

\begin{lemma}
\label{lem:hamiltonian} 
The graph $G(\bar{C}^d)$ is Hamiltonian.
\end{lemma}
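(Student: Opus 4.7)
The plan is to exhibit an explicit Hamiltonian cycle rather than invoke any deep result (e.g.\ a Dirac/Ore-type theorem). Writing the $2d$ vertices of $\bar{C}^d$ as $v_i := \sqrt{d}\,e_i$ and $-v_i$ for $i=1,\dots,d$, the observation recorded just before the lemma is that two vertices of $\bar{C}^d$ are joined by an edge of $G(\bar{C}^d)$ iff they are not antipodal, i.e.\ iff they are not of the form $\{v_i,-v_i\}$. Thus a Hamiltonian cycle is any cyclic ordering of the $2d$ vertices in which no two consecutive entries form an antipodal pair.

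For $d\ge 2$ I would simply propose the cycle
\[
    v_1 \to v_2 \to \cdots \to v_d \to -v_1 \to -v_2 \to \cdots \to -v_d \to v_1.
\]
This visits all $2d$ vertices exactly once. To check that each consecutive pair is a true edge of $G(\bar{C}^d)$, one just verifies the three types of transitions: (a) within the positive block, consecutive vertices are $v_i,v_{i+1}$ with distinct indices, so they are non-antipodal; (b) analogously within the negative block; (c) the two ``bridge'' transitions $v_d\to -v_1$ and $-v_d\to v_1$ involve indices that differ (using $d\ge 2$), hence again non-antipodal. This immediately yields a Hamiltonian cycle.

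The case $d=1$ is degenerate—$\bar{C}^1$ is a line segment with only two (antipodal) vertices and no edges—so the lemma as stated is only meaningful for $d\ge 2$; I would simply note this, since the upper-bound theorem is interesting only in the asymptotic regime.

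There is essentially no technical obstacle here: the substance of the lemma is purely combinatorial, and the work reduces to writing down the right permutation of the vertices. The only thing to be careful about is making the non-antipodality check cover all transitions, in particular the wraparound edges between the two blocks, which is where the hypothesis $d\ge 2$ is used.
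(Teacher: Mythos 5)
Your proof is correct, and it takes a genuinely different (and arguably cleaner) route than the paper. The paper proceeds by induction on $d$: it takes the $4$-cycle on the square $\bar{C}^2$ as the base case and, given a Hamiltonian cycle $c^{d-1}$ of $G(\bar{C}^{d-1})$, splices the two new antipodal vertices $v^\star$ and $-v^\star$ into two distinct edges of $c^{d-1}$ (each new vertex is adjacent to every old vertex, so any such insertion works). Your explicit cycle $v_1 \to \cdots \to v_d \to -v_1 \to \cdots \to -v_d \to v_1$ achieves the same thing in one step, and your case analysis of the transitions --- within each block and across the two bridge edges, where $d\ge 2$ is needed --- is complete and correct. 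What your approach buys is a concrete, verifiable witness with no induction bookkeeping; what the paper's approach buys is essentially nothing extra here, and in fact its write-up contains a small typo (the second pair of replacement edges is listed as $\{v_3,-v^\star\},\{v_3,-v^\star\}$ where $\{-v^\star,v_4\}$ is intended) that your explicit construction avoids entirely. Your remark that $d=1$ is degenerate is also apt; the paper's induction likewise only starts at $d=2$, and the lemma is only used asymptotically.
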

\begin{proof}
The proof is by induction on $d$. The statement clearly holds for $\Reals^2$ where the cross polytope is a square, and $G(\bar{C}^2)$ itself is a Hamiltonian cycle that we denote by $c^2$. Consider a Hamiltonian cycle $c^{d-1}$ for $G(\bar{C}^{d-1})$. Note that $G(\bar{C}^d)$ can be constructed by $G(\bar{C}^{d-1})$ and adding two nodes, $v^\star$ and $-v^\star$ that are connected to each of the nodes of $G(\bar{C}^{d-1})$. To construct a cycle $c^d$ of $G(\bar{C}^d)$, take any two distinct edges $\{v_1,v_2\}$ and $\{v_3,v_4\}$ contained in $c^{d-1}$ and replace them with the edges $\{v_1,v^\star\},\{v^\star,v_2\}$ and $\{v_3,-v^\star\},\{v_3,-v^\star\}$, respectively. The resulting tour is connected, visits all nodes of $G(\bar{C}^d)$ exactly once and the used edges are contained in the edge set of $G(\bar{C}^d)$, thus ensuring the feasibility of~$c^d$.
\end{proof}

\begin{figure}
    \centering
    \begin{tikzpicture}
        \node[anchor=south west,inner sep=0] (image) at (0,0) {\includegraphics[width=\textwidth]{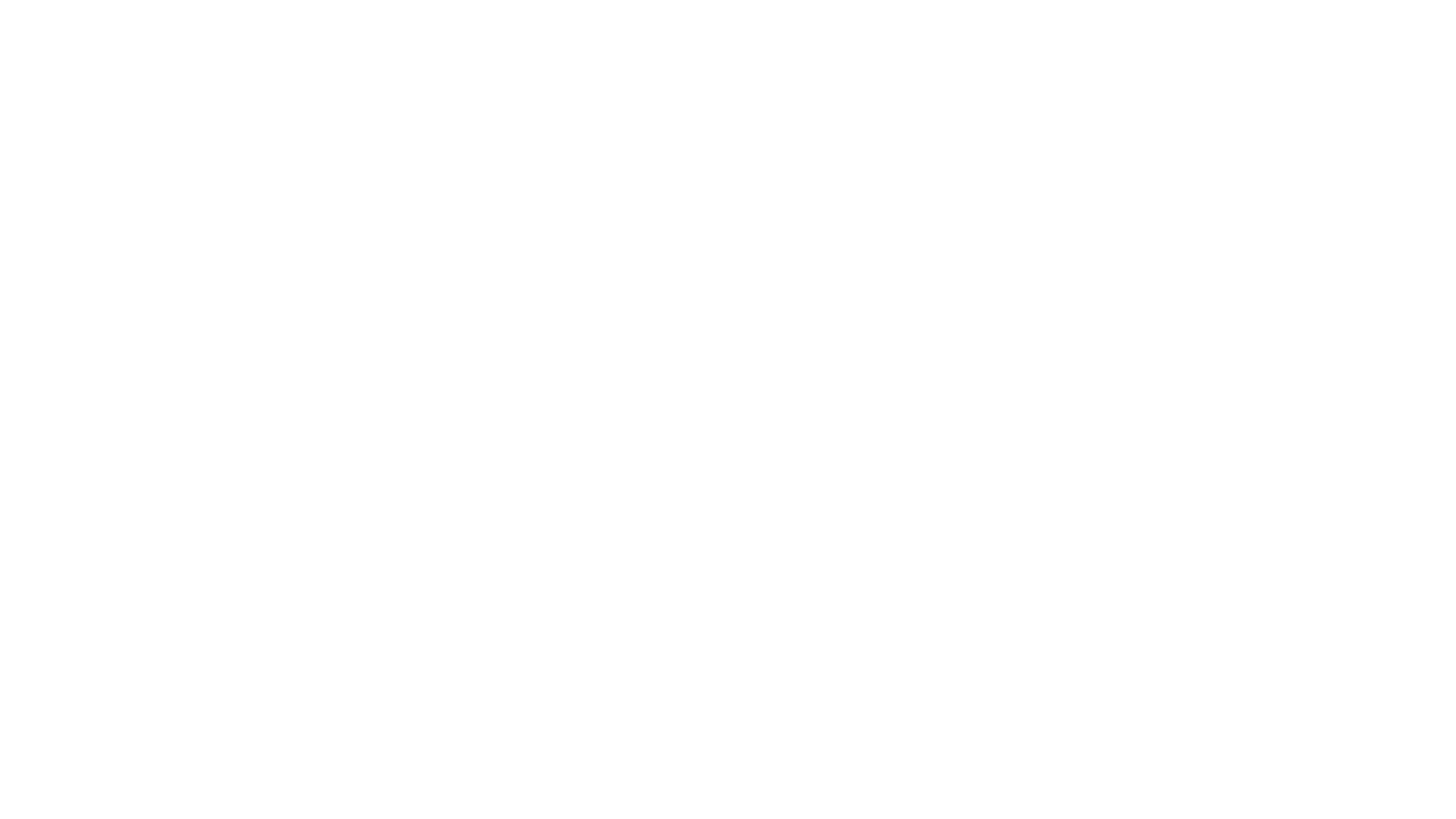}};
        \begin{scope}[x={(image.south east)},y={(image.north west)}]
            \path (.285,0.513) coordinate (XN) -- (.4825,.9) coordinate (ZP) -- (0.707,.462) coordinate (XP)-- (.484,.103) coordinate (ZN) -- cycle;
            \coordinate (YP) at (0.3825,.415);
            \coordinate (YN) at ($(XN)+(XP)-(YP)$); 
        \end{scope}
        \draw[dashed] (ZP) -- (YN);
        \draw[dashed] (XP) -- (YN);
        \draw[very thick, dashed] (XN) -- (YN) -- (ZN);
        \node[anchor=south west,inner sep=0] (image) at (0,0) {\includegraphics[width=\textwidth]{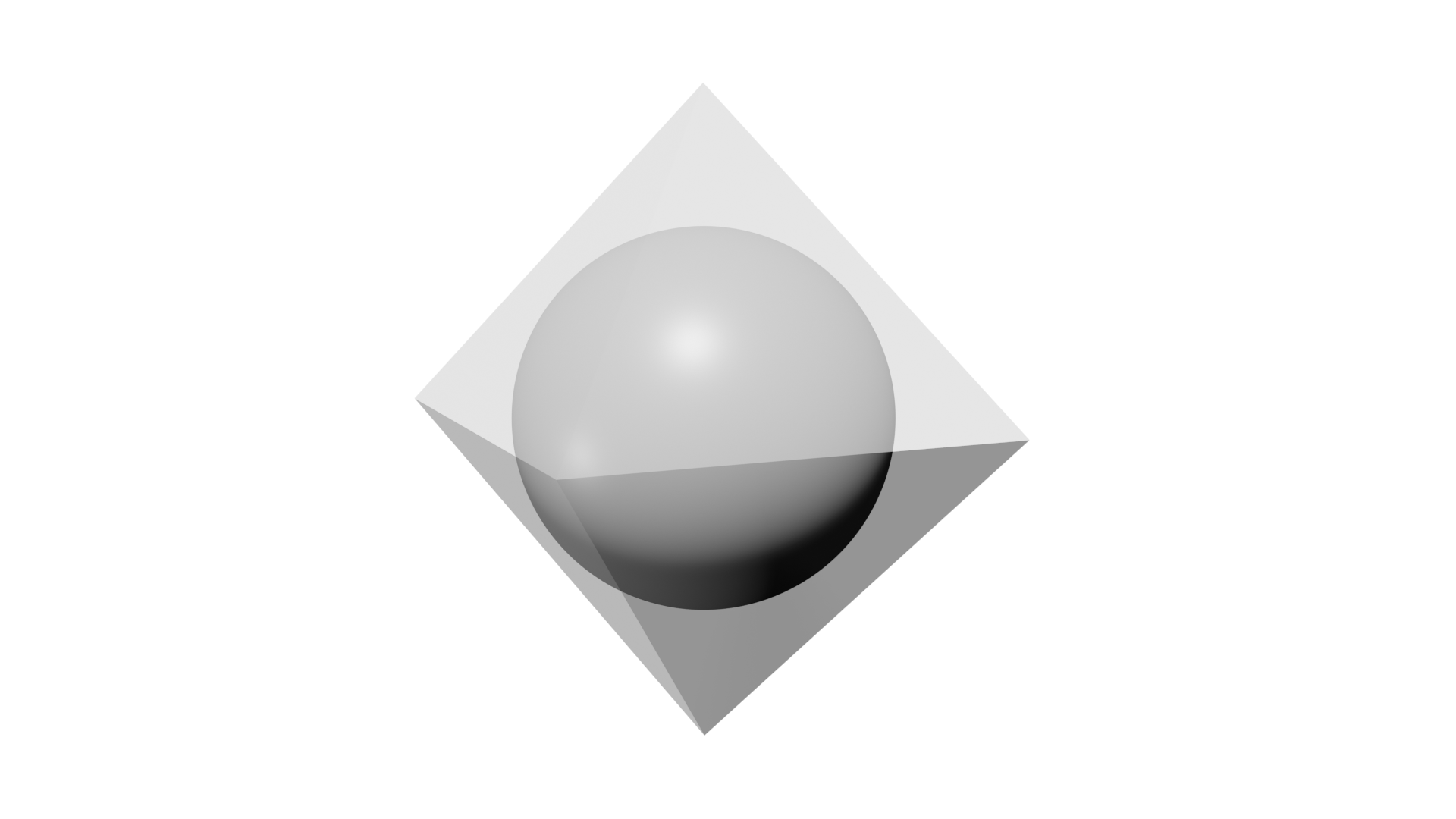}};
        \draw (ZP) -- (YP) -- (ZN);
        \draw (XP) -- (YP) -- (XN);
        \draw (XN) -- (ZP) -- (XP) -- (ZN) -- cycle;
        \draw[very thick] (ZN) -- (YP) -- (XP) -- (ZP) -- (XN);
    \end{tikzpicture}
    \caption{The cross polytope that has vertices at distance $\sqrt{d}$ from the origin contains the unit sphere, and thus, a tour of its vertices (thick) inspects the unit sphere.}
    \label{fig:cross_polytope}
\end{figure}

By Lemma~\ref{lem:visibility} it directly follows that any closed curve that visits all the vertices of $\bar{C}^d$ inspects the unit sphere, and therefore so does the closed curve $\gamma$ corresponding to the Hamiltonian cycle ${c}^d$ in the skeleton of $\bar{C}_d$. To complete the proof of Theorem~\ref{thm:upper},
note that each edge of $\bar{C}_d$ has a length of $\sqrt{2d}$ and that $\gamma$ traverses $2d$ such edges.

\section{Conclusion}
In this paper, we narrowed down the optimal competitive ratio for the $d$-Dimensional Hyperplane Search Problem to $\Omega(d)\cap O(d^{3/2})$. The obvious open problem is closing this gap.

\paragraph*{Acknowledgements.} We thank Paula Roth for helpful discussions.

\bibliographystyle{plain}
\bibliography{searching}

\end{document}